\pdfoutput=1
\documentclass[11pt]{article}

\usepackage{arxiv}

\usepackage{graphicx}
\usepackage{amsmath}
\usepackage{amssymb}
\usepackage{amsthm}
\usepackage{booktabs}
\usepackage{multirow}
\usepackage{microtype}
\usepackage[numbers]{natbib}
\usepackage{xcolor}
\usepackage{url}
\usepackage{hyperref}

\newtheorem{definition}{Definition}
\newtheorem{lemma}{Lemma}
\hypersetup{
  colorlinks=true,
  linkcolor=blue,
  citecolor=blue,
  urlcolor=blue
}

\title{Agentproof: Static Verification of Agent Workflow Graphs}

\author{
  Melwin Xavier \\
  Lule\aa{} tekniska universitet, Sweden \\
  \texttt{melwin.xavier@ltu.se}
  \and
  Vaisakh M A \\
  Independent Researcher \\
  \texttt{997vaisakh@gmail.com}
  \and
  Melveena Jolly \\
  Independent Researcher \\
  \texttt{melveenajollyk@gmail.com}
  \and
  Midhun Xavier \\
  Independent Researcher \\
  \texttt{midhun@industriagents.com}
}

\date{}

\begin{document}
\maketitle

\begin{abstract}
Agent frameworks increasingly encode tool-using behavior as explicit
workflow graphs, yet safety enforcement remains a runtime concern.
These frameworks expose analyzable graph structure
through their APIs, enabling pre-deployment static verification of
safety properties that runtime guardrails can only check reactively.

This paper presents \textsc{Agentproof}, a system that automatically extracts a
unified abstract graph model from four major agent frameworks
(LangGraph, CrewAI, AutoGen, Google ADK), applies six structural
checks with witness trace generation, and evaluates temporal safety
policies via a DSL compiled to deterministic finite automata---both
statically through a graph\,$\times$\,DFA product construction and at
runtime over event traces. Unlike
general-purpose model checkers, Agentproof requires no manual modeling.

In a curated benchmark of 18 author-constructed workflows, 27\% of the
benchmark contain structural defects (dead-end nodes, unreachable exits)
and 55\% violate a human-gate policy when enforced---distinct categories
that prior work conflates. All 15 temporal policies defined fit within
the seven-form DSL fragment, and verification completes in sub-second
time for graphs up to 5{,}000 nodes. The corpus serves as a reproducible
benchmark for evaluating static verification tools rather than as a
prevalence study; defect rates reflect tool detection capability on a
targeted benchmark, not base rates in production systems. Nonetheless,
static graph verification complements runtime guardrails by catching
topology-level defects that runtime tools miss unless the offending path
is exercised.
\end{abstract}

\section{Introduction}
\label{sec:introduction}

Large language models (LLMs) are increasingly deployed as autonomous
agents capable of interacting with external tools and
APIs~\citep{schick2024toolformer, yao2023react}. Rather than generating
text in isolation, these agents retrieve documents, query databases,
send messages, and trigger downstream workflow actions with
real-world side effects. A growing body of work surveys this
paradigm~\citep{wang2024agent_survey}, and several orchestration
frameworks have emerged to structure agent behavior as explicit workflow
graphs: LangGraph~\citep{langgraph_docs},
CrewAI~\citep{crewai_docs}, AutoGen~\citep{wu2023autogen}, and Google
ADK~\citep{google_adk_docs} each represent computation steps as nodes
and allowable transitions as edges.

Despite this rich structural representation, safety enforcement in
current agent systems remains predominantly a \emph{runtime} concern.
Tools such as NeMo Guardrails~\citep{nemo_guardrails} and
LlamaGuard~\citep{llamaguard} intercept individual LLM calls or tool
invocations and apply content-level or policy-level filters at execution
time. While valuable for catching toxic outputs and prompt injection
attempts, these runtime approaches have three fundamental limitations:
(i)~they impose per-call latency overhead, (ii)~they detect violations
only when the offending execution path is actually exercised, and
(iii)~they provide no coverage guarantees over paths that remain
untriggered during testing.

\paragraph{Motivating example.}
Consider an email triage agent with the following workflow: incoming
emails are classified by intent, routed to either an urgent or normal
handler, drafted, and sent. During iterative development, a developer
adds a \texttt{draft\_response} node on the normal-priority path but
forgets to connect it to the \texttt{send} node. The result is a
\emph{dead end}: normal-priority emails enter the draft stage and
silently halt. A runtime guardrail would not catch this defect unless
the normal-priority path is exercised with an email that triggers it.
A static analysis of the workflow graph, however, immediately flags the
dead-end node and produces a witness trace:
\texttt{\_\_start\_\_ $\to$ classify $\to$ router $\to$
normal\_handler $\to$ draft\_response} (stuck).

\paragraph{Research gap.}
This class of defect---a topological error in the workflow
graph cannot be addressed by content-level runtime guardrails. In
principle, classical model checking techniques could detect such
errors: reachability analysis, deadlock detection, and temporal logic
verification are well-established in the formal methods
literature~\citep{clarke1999modelchecking, baier2008modelchecking}.
However, general-purpose model checkers such as
SPIN~\citep{holzmann1997spin} and NuSMV~\citep{cimatti2002nusmv}
require manual translation of the system under analysis into a
dedicated modeling language---a prohibitive overhead for agent
developers iterating on workflow designs. Similarly, the business
process management (BPM) community has a long tradition of workflow
soundness verification~\citep{vanderaalst2011bpm}, but these methods
target visual modeling tools and standardized notations (e.g., BPMN,
Petri nets), not the programmatic API objects through which agent
frameworks define their graphs. To date, no system automatically
extracts and statically verifies agent workflow graphs directly from
framework source code.

\paragraph{Proposed approach.}
This paper argues for \emph{pre-deployment static verification} of
agent workflow graphs. The key observation is that modern agent
frameworks expose workflow structure through their APIs, making the
graph amenable to automated extraction and analysis. When agent behavior
is graph-structured, safety properties reduce to reachability,
isolation, and temporal constraints over the topology properties that
can be checked exhaustively without running the
agent~\citep{clarke1999modelchecking}.

This paper presents Agentproof, a practical system for static analysis
of agent workflows that focuses on properties (a)~expressible over the
workflow graph and (b)~independent of LLM text generation semantics.
Unlike general-purpose model checkers, Agentproof extracts the analysis
model automatically from framework source code, requiring no manual
modeling effort.

\paragraph{Contributions.}
\begin{itemize}
  \item A cross-framework extraction pipeline: extractors for four major
  agent frameworks (LangGraph, CrewAI, AutoGen, Google ADK) automatically
  bridge heterogeneous representations---each with different entry/exit
  conventions, edge semantics, and hierarchy models---into a unified
  abstract workflow model with typed nodes and edges, eliminating the
  manual modeling step required by general-purpose model
  checkers~(Section~\ref{sec:graph_model}).
  \item A pre-deployment verification pipeline comprising six
  structural checks with \emph{witness trace generation} and a temporal
  policy DSL covering the safety fragment of LTL, compiled to
  deterministic finite automata~(Section~\ref{sec:verification_methods}).
  \item An empirical evaluation on 18 curated workflows demonstrating
  that structural defects (dead-end nodes, unreachable exits) and
  policy violations (missing human gates) are common, with sub-second
  verification for graphs up to 5{,}000
  nodes~(Section~\ref{sec:evaluation}).
\end{itemize}

\paragraph{Scope.}
The proposed approach validates workflow design decisions early: ``can this
graph reach a destructive tool?'' or ``must human review occur between
two sensitive actions?'' It does not attempt to prove semantic
properties of LLM outputs (e.g., truthfulness), which remain
inherently non-deterministic. This boundary is discussed precisely in
the threat model (Section~\ref{sec:threat_model}).

\section{Background}
\label{sec:background}

\subsection{Agent workflow graphs}
Many agent frameworks represent behavior as a directed graph, sometimes
called a \emph{state graph} or \emph{workflow graph}. Nodes correspond
to computation steps: an LLM call, a tool invocation, a
router/selector that branches based on state, or a human approval gate.
Edges correspond to transitions between steps, including conditional
branches, parallel fan-outs, and loop back-edges.

This explicit control-flow model is a key difference between
graph-orchestrated agents and monolithic ``single prompt''
applications: the graph structure can be analyzed statically without
executing the agent.

\subsection{Structural verification}
Structural properties depend only on graph topology and type
annotations. Typical examples include:
(i)~reachability (can a node be reached from the entry point?),
(ii)~dead ends (nodes with no outgoing transitions),
(iii)~isolation (must a sensitive tool be preceded by a gate node?),
and (iv)~sanity constraints on router nodes.

These checks are inexpensive: they can be implemented with standard
graph algorithms (BFS/DFS) and run in time linear in $|V| + |E|$.

\subsection{Temporal monitors}
Temporal logics such as linear temporal logic
(LTL)~\citep{pnueli1977temporal} provide a formal language to specify
constraints over execution traces. A common verification strategy is to
compile temporal constraints into automata and reason over traces or
over the product of a system model and the
automaton~\citep{vardi1994automata}.

Agentproof compiles a temporal policy DSL into deterministic finite
automata and supports two complementary evaluation modes:
(i)~\emph{static verification} via a graph~$\times$~DFA product
construction that checks all graph paths without execution, and
(ii)~\emph{runtime monitoring} that evaluates the DFA over a live or
simulated event stream.  The DSL targets the safety fragment of LTL,
deliberately trading full expressiveness for deployment simplicity.

\section{Threat model}
\label{sec:threat_model}

Three classes of threats to agent workflow safety are considered:

\paragraph{T1: Developer mistakes.}
A developer may inadvertently create a workflow with structural defects:
unreachable exit nodes, dead-end branches that silently drop execution,
routers with incorrectly typed edges, or tool nodes that lack explicit
declarations. Such defects arise naturally from iterative development and
are difficult to catch by manual inspection, especially in large graphs.

\paragraph{T2: Malicious workflow injection.}
An adversary with write access to workflow definitions (e.g., via
compromised configuration files, a shared workflow registry, or a
supply-chain attack on a workflow library) may craft topologies that
bypass safety gates---for example, adding a conditional branch that
routes around a required human-approval node.

\noindent
\emph{Trust assumptions for T2.}
The deployment model assumes an architecture where Agentproof runs as a
mandatory verification gate in a trusted CI/CD pipeline (e.g., a
GitHub Actions workflow or a build system pre-deploy hook).
The adversary can modify workflow definition files (checked into
version control), but \emph{cannot} modify the CI/CD pipeline
configuration, the verification tool, or the extractor
implementation. This is analogous to how static linters catch
malicious code contributions in open-source projects: the linter is
trusted; the contribution is not.

\noindent
\emph{What T2 does not cover.}
An adversary who compromises the build system itself, the Agentproof
binary, or the framework runtime can bypass verification entirely.
Such attacks require integrity protection at the infrastructure level
(code signing, secure boot, access controls) and are outside the scope of this work.

\noindent
\emph{Practical caveat.}
T2 attacks have not yet been observed in the wild against agent workflow
registries. This threat class is included because the attack surface
exists in principle (shared repositories, pip-installable workflow
packages) and because defending against it requires no additional
machinery beyond the structural checks already needed for T1. No
adversarial evaluations have been conducted to determine whether a
sophisticated attacker could craft topologies that evade all six
structural checks while still achieving a malicious objective; such
red-teaming is an important direction for future work.

\paragraph{T3: Runtime graph mutation.}
Some frameworks allow dynamic modification of the workflow graph during
execution (e.g., adding agents to a team, modifying transition rules).
If the post-mutation graph is not re-verified, the guarantees established
at deployment time may no longer hold.

\medskip

\noindent
\textbf{Trust boundary.}
The approach assumes that the framework API semantics (e.g., LangGraph's
\texttt{StateGraph}, CrewAI's \texttt{Crew}) are correct: the extractor
faithfully reads the graph structure that the framework will execute.
The Python runtime and the extractor implementation are within the
trusted computing base. The \emph{workflow definition} authored or
modified by developers and the \emph{LLM outputs} generated at
runtime are \emph{not} trusted.

\medskip

\noindent
\textbf{Scope.}
The structural checks and temporal monitors address threats T1
and T2 statically: they detect defective or malicious topologies before
deployment. Threat T3 requires either runtime re-verification or
immutable workflow definitions; this is discussed in
Section~\ref{sec:limitations}.
LLM output content (e.g., prompt injection, toxic generation) is
explicitly out of scope; this is addressed by complementary runtime
guardrails (Section~\ref{sec:related_work}).

\begin{figure}[t]
\centering
\includegraphics[width=\linewidth]{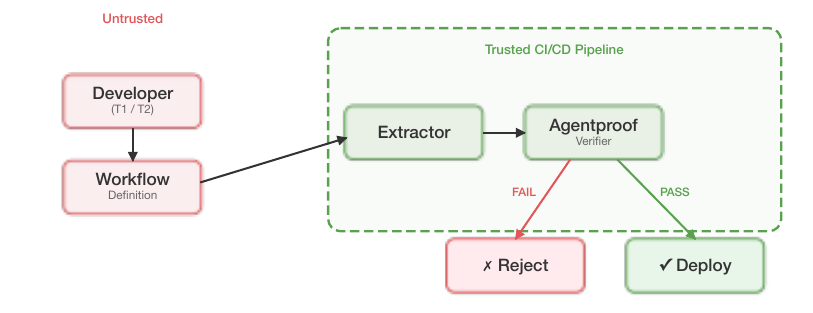}
\caption{Deployment model for T2: Agentproof as a CI/CD verification
gate. The workflow definition is untrusted; the extractor, verifier,
and pipeline are within the trusted computing base.}
\label{fig:deployment}
\end{figure}

\begin{table}[t]
\centering
\small
\caption{Threat classes and Agentproof coverage.}
\label{tab:threats}
\begin{tabular}{@{}llcc@{}}
\toprule
\textbf{Threat} & \textbf{Example} & \textbf{Static} & \textbf{Runtime} \\
\midrule
T1: Dev.\ mistake & Dead-end node & \checkmark &   \\
T1: Dev.\ mistake & Unreachable exit & \checkmark &   \\
T1: Dev.\ mistake & Missing human gate & \checkmark &   \\
T2: Injection & Bypassed approval & \checkmark &   \\
T2: Injection & Added unsafe tool path & \checkmark &   \\
T3: Mutation & Dynamic agent addition & re-verify\textsuperscript{*} & \checkmark \\
  & Toxic LLM output &   & \checkmark \\
\bottomrule
\end{tabular}
\par\smallskip
\raggedright\footnotesize
\textsuperscript{*}Addressed if the graph is re-verified after mutation.
\end{table}

\section{System overview}
\label{sec:system_overview}

Figure~\ref{fig:pipeline} summarizes Agentproof's workflow. The system ingests a
framework-specific agent definition, extracts an abstract graph model, and then
runs a suite of structural and temporal checks.

\begin{figure}[t]
  \centering
  \includegraphics[width=\linewidth]{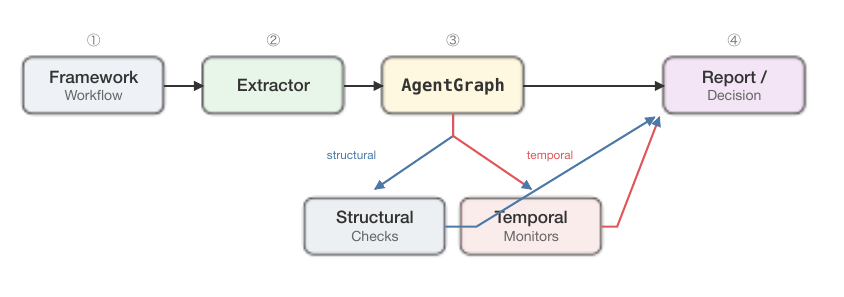}
  \caption{Agentproof pipeline: extract a framework workflow into an abstract
  graph model, run structural verification, and compile temporal policies into
  monitors for trace evaluation.}
  \label{fig:pipeline}
\end{figure}

\paragraph{Step 1: Obtain a workflow object.}
The approach targets agent orchestration libraries that expose control
flow as a graph or composition tree. Agentproof provides extractors for
LangGraph, Google ADK, AutoGen, and CrewAI
\citep{langgraph_docs,google_adk_docs,autogen_agentchat,crewai_docs}.

\paragraph{Step 2: Extract an \texttt{AgentGraph}.}
Each framework has an extractor that produces a common representation: a set of
typed nodes and typed edges with optional metadata (e.g., tool names bound to a
node). This makes downstream analyses framework-agnostic.

\paragraph{Step 3: Run structural checks.}
Structural checks detect workflow defects that do not require executing the
agent: unreachable nodes, dead ends, routing shape mismatches, and the presence
of human-in-the-loop nodes when required by policy.

\paragraph{Step 4: Compile temporal policies.}
A temporal policy DSL is compiled to DFAs. Policies are written against
abstract events (tool calls, action tags, or decisions) rather than
framework internals.

\paragraph{Step 5a: Static temporal verification.}
Each compiled DFA is combined with the extracted graph in a product
construction: BFS over $(V \times Q)$ detects temporal violations
reachable on \emph{any} graph path, without executing the agent.

\paragraph{Step 5b: Runtime monitor evaluation.}
Monitors are also evaluated over offline simulated traces or integrated
into a runtime event stream. Violations are mapped to handling levels
(warn, block, halt, escalate) to support different operational postures.

\section{Graph model and extraction}
\label{sec:graph_model}

This section defines a framework-agnostic workflow representation with typed nodes
and edges. This abstraction is the common surface for all verification.

\subsection{Formal model}

\begin{definition}[Agent Workflow Graph]
\label{def:awg}
An \emph{agent workflow graph} is a tuple
$G = (V, E, \kappa_V, \kappa_E, T, v_0, V_f)$ where:
\begin{itemize}
  \item $V$ is a finite set of nodes.
  \item $E \subseteq V \times V$ is a set of directed edges.
  \item $\kappa_V: V \to \mathcal{K}_V$ assigns each node a kind, with\\
    $\mathcal{K}_V = \{\textsc{entry},\; \textsc{exit},\;
    \textsc{tool},\; \textsc{llm},\; \textsc{router},\;
    \textsc{human},\; \textsc{subgraph},\;
    \textsc{passthrough}\}$.
  \item $\kappa_E: E \to \mathcal{K}_E$ assigns each edge a kind, with\\
    $\mathcal{K}_E = \{\textsc{direct},\; \textsc{conditional},\;
    \textsc{parallel},\; \textsc{loop}\}$.
  \item $T: V \to 2^{\text{ToolNames}}$ maps each node to a (possibly
    empty) set of declared tool names; $T(v)$ is non-empty only when
    $\kappa_V(v) = \textsc{tool}$.
  \item $v_0 \in V$ with $\kappa_V(v_0) = \textsc{entry}$ is the unique
    entry node.
  \item $V_f \subseteq V$ with $\kappa_V(v) = \textsc{exit}$ for all
    $v \in V_f$ is the set of exit nodes.
\end{itemize}
\end{definition}

\begin{definition}[Execution trace]
\label{def:trace}
A \emph{trace} of $G$ is a finite sequence $\pi = v_0 v_1 \ldots v_n$
such that $(v_i, v_{i+1}) \in E$ for all $0 \le i < n$.
$\mathrm{Traces}(G)$ denotes the set of all maximal traces---those
ending at a node in $V_f$ or at a node with no outgoing edges.
A trace that enters a cycle with no path to $V_f$ or a dead end is
non-maximal; the reverse-reachability check ($\mathrm{ExitReachAll}$)
detects such livelock conditions.
\end{definition}

\begin{definition}[Structural predicates]
\label{def:structural_predicates}
Six structural predicates over $G$ are defined:
\begin{enumerate}
  \item $\mathrm{ExitReach}(G) \iff \forall v_f \in V_f.\;
    v_f \in \mathrm{Reach}(v_0)$, where $\mathrm{Reach}(v_0)$ is the
    set of nodes reachable from $v_0$ via directed paths.
  \item $\mathrm{ExitReachAll}(G) \iff \mathrm{Reach}(v_0) \subseteq
    \mathrm{RevReach}(V_f)$, where $\mathrm{RevReach}(V_f)$ is the set
    of nodes from which some exit node in $V_f$ is reachable via directed
    paths. This ensures every reachable node can eventually reach an exit.
  \item $\mathrm{NoDead}(G) \iff \forall v \in V.\;
    \kappa_V(v) \neq \textsc{exit} \implies
    \exists u \in V.\; (v,u) \in E$.
  \item $\mathrm{RouterShape}(G) \iff \forall v \in V.\;
    \kappa_V(v) = \textsc{router} \implies
    \forall (v,u) \in E.\; \kappa_E(v,u) = \textsc{conditional}$.
  \item $\mathrm{HumanGate}(G) \iff \exists v \in V.\;
    \kappa_V(v) = \textsc{human}$.  An optional path-based variant,
    $\mathrm{HumanGateCov}(G, S)$, verifies that for a designated
    set~$S$ of sensitive tools, every path from $v_0$ to a node
    invoking a tool in~$S$ passes through a \textsc{human} node.
  \item $\mathrm{ToolDecl}(G) \iff \forall v \in V.\;
    \kappa_V(v) = \textsc{tool} \implies T(v) \neq \emptyset$.
\end{enumerate}
\end{definition}

\noindent
Each predicate is decidable in time linear in $|V| + |E|$.
Soundness lemmas establishing that passing checks guarantee the
corresponding trace-level properties are given in
Appendix~\ref{app:proofs}.

\begin{definition}[Extraction correctness]
\label{def:extraction_correctness}
An extractor $\mathcal{E}_F$ for framework $F$ is \emph{correct} with
respect to a workflow object $W$ if the extracted graph
$G = \mathcal{E}_F(W)$ satisfies:
(i)~every node in $G$ corresponds to a computation step in $W$ (no
phantom nodes),
(ii)~every computation step in $W$ appears as a node in $G$ (no missing
nodes), and
(iii)~$(u, v) \in E$ if and only if $W$ permits a transition from the
step corresponding to $u$ to the step corresponding to $v$.
Conditions (i)--(ii) correspond to node precision/recall~$= 1$;
condition~(iii) corresponds to edge precision/recall~$= 1$.
This is validated empirically in Section~\ref{sec:accuracy}.
\end{definition}

\subsection{Framework extractors}
Agentproof includes extractors that map framework objects into this model:

\paragraph{LangGraph.}
LangGraph exposes a compiled state graph with sentinel start/end nodes. The
extractor identifies entry/exit sentinels, infers node kind from tool bindings
and naming heuristics (e.g., human review nodes), and encodes conditional edges
when the underlying framework marks an edge as conditional.

\paragraph{Google ADK.}
Google ADK composes agents as a tree of sequential, parallel, and loop agents.
The extractor performs a tree walk, emitting subgraph nodes for composite
agents and encoding sequential chains, parallel fan-outs, and loop back-edges.

\paragraph{AutoGen.}
AutoGen team topologies can be provided as a team object (e.g., round-robin) or
as an explicit list of agents plus a transition relation. The extractor
constructs a directed graph of speaker transitions, synthesizes entry/exit, and
connects leaf speakers to exit.

\paragraph{CrewAI.}
CrewAI describes task pipelines (sequential or hierarchical). The extractor
creates a node per task, uses process mode to add sequential edges or manager
routing edges, and incorporates context dependencies between tasks.

\subsection{Cross-framework extraction challenge}
\label{sec:extraction_challenge}

A central contribution is the \emph{automatic} extraction of
\texttt{AgentGraph} instances from heterogeneous framework APIs. Each
framework represents workflows differently, making unified extraction
non-trivial:

\begin{itemize}
  \item \textbf{LangGraph} uses an explicit \texttt{StateGraph} with
  \texttt{add\_node()} and \texttt{add\_edge()} calls. Entry and exit
  are string sentinels (\texttt{START}, \texttt{END}). Conditional
  routing uses \texttt{add\_conditional\_edges()} with a path map dict.
  \item \textbf{CrewAI} defines workflows implicitly via a list of
  \texttt{Task} objects passed to a \texttt{Crew}. The topology depends
  on the \texttt{process} parameter (\texttt{sequential} vs.\
  \texttt{hierarchical}) and optional \texttt{context} dependencies
  between tasks.
  \item \textbf{AutoGen} uses group-chat patterns
  (\texttt{RoundRobinGroupChat}, \texttt{SelectorGroupChat}) where agent
  ordering defines the graph. Loop edges arise from round-robin cycling.
  \item \textbf{Google ADK} composes agents via nesting constructors
  (\texttt{SequentialAgent}, \texttt{ParallelAgent}, \texttt{LoopAgent})
  with a \texttt{sub\_agents} list. The hierarchy must be flattened into
  a single graph with appropriate edge kinds.
\end{itemize}

\noindent
The extractors normalize these diverse representations into the
same \texttt{AgentGraph} type system (Definition~\ref{def:awg}),
synthesizing entry/exit sentinel nodes where the framework does not
provide them explicitly.  General-purpose model checkers such as
SPIN~\citep{holzmann1997spin} or NuSMV~\citep{cimatti2002nusmv}
could verify the same properties, but would require the developer to
\emph{manually} translate the workflow into Promela or SMV---a
process that takes hours even for small graphs (see
Section~\ref{sec:related_work} for a concrete example).

\subsection{Example workflow graph}
Figure~\ref{fig:example_graph} shows a small extracted workflow used throughout
the paper to illustrate typed nodes and conditional branches.

\begin{figure}[t]
  \centering
  \includegraphics[width=\linewidth]{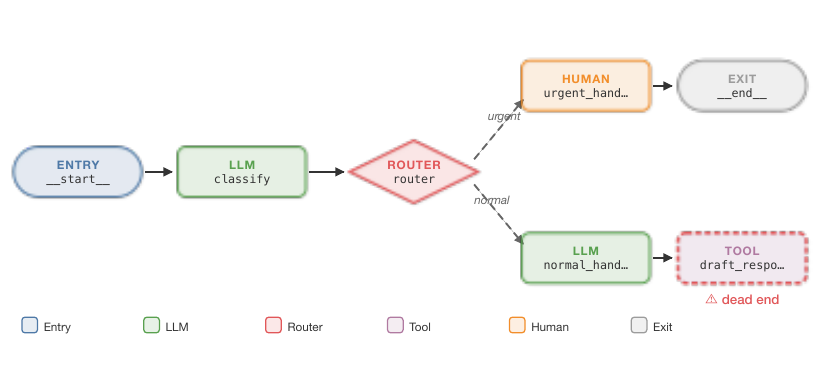}
  \caption{Example extracted workflow graph with typed nodes (ENTRY/ROUTER/LLM/TOOL/HUMAN/EXIT).}
  \label{fig:example_graph}
\end{figure}

\section{Verification methods}
\label{sec:verification_methods}

Agentproof provides two complementary verification layers: (i)~structural checks
over the extracted graph and (ii)~temporal monitors evaluated over an
event stream.

\subsection{Structural checks}
Structural checks operate over the adjacency structure and node/edge
kind labels of $G$ (Definition~\ref{def:awg}). Each check implements
one of the predicates from Definition~\ref{def:structural_predicates}.

\paragraph{Exit reachability.}
The set of nodes reachable from $v_0$ is computed using BFS to verify
that $V_f \subseteq \mathrm{Reach}(v_0)$.
Complexity: $O(|V| + |E|)$.

\paragraph{Reverse reachability.}
A complementary check verifies that every node reachable from $v_0$ can
itself reach some exit node in $V_f$.  The algorithm performs a BFS on the
\emph{reverse} adjacency (edges traversed backward) starting from all exit
nodes to compute $\mathrm{RevReach}(V_f)$.  Any node in
$\mathrm{Reach}(v_0) \setminus \mathrm{RevReach}(V_f)$ is a
\emph{livelock node}: reachable from the entry but trapped in a cycle with
no path to termination.
Complexity: $O(|V| + |E|)$.

\paragraph{Dead-end detection.}
Nodes $v$ with $\kappa_V(v) \neq \textsc{exit}$ and no
outgoing edges. Dead ends indicate missing transitions or incomplete
error handling.
Complexity: $O(|V| + |E|)$ with adjacency precomputation.

\paragraph{Router shape checks.}
The check verifies that all outgoing edges from \textsc{router} nodes are labeled
\textsc{conditional}. Complexity: $O(|E|)$.

\paragraph{Human-in-the-loop presence.}
The existence check determines whether the graph contains any node with
$\kappa_V(v) = \textsc{human}$.  An optional \emph{coverage} variant
verifies that for a set~$S$ of sensitive tool names, no path from $v_0$ to
a node invoking a tool in~$S$ avoids all \textsc{human} nodes.  The
coverage check builds a modified adjacency excluding \textsc{human} nodes
and performs BFS from $v_0$; any sensitive tool node still reachable
indicates a human-free path.
Complexity: existence $O(|V|)$; coverage $O(|V| + |E|)$.

\paragraph{Tool declaration checks.}
The check flags \textsc{tool} nodes with $T(v) = \emptyset$.
Complexity: $O(|V|)$.

\paragraph{Witness trace generation.}
When a check fails, the verifier produces a \emph{witness trace}: a
concrete path through the graph demonstrating the defect. For
unreachable exits, the witness shows the path from $v_0$ to the
reachability frontier plus the unreachable target. For dead ends, the
witness is a BFS path from $v_0$ to the stuck node. Witness traces
follow standard model-checking practice~\citep{clarke1999modelchecking}
and significantly aid debugging.

\paragraph{Static temporal verification.}
In addition to runtime trace evaluation, temporal policies can be checked
\emph{statically} via a graph $\times$ DFA product construction.  For a
compiled rule with DFA states~$Q$ and transition function~$\delta$, the
product state space is $V \times Q$.  BFS from $(v_0, q_0)$ advances the
DFA at each node using the node's event signature and follows graph edges.
If any product state $(v, q')$ has $q'$ in the DFA violation set, the
property is violated and the BFS parent chain yields a witness path.
Explored states are bounded by $|V| \cdot |Q|$, so verification remains
linear in graph size for fixed-size DFAs.

\subsection{Temporal policy monitors}
Agentproof supports a temporal policy DSL covering the safety fragment of
LTL~\citep{pnueli1977temporal}, compiled into deterministic finite
automata and evaluated with constant overhead per event.

\paragraph{Event model.}
Events are dictionaries with optional fields: \texttt{tool\_name},
\texttt{action\_type}, \texttt{decision}, and a set of \texttt{tags}.
Predicates match against these fields (e.g., \texttt{tool:delete\_account}).
The event symbol $\sigma(v)$ is derived from the node's kind and tool
bindings: tool nodes emit \texttt{tool:$t$} for each $t \in T(v)$;
other nodes emit their kind label.

\paragraph{DSL grammar.}
The DSL supports seven expression forms (full BNF in
Appendix~\ref{app:grammar}):
\begin{enumerate}
  \item \textbf{Forbidden:} \texttt{G !atom}   violation if
    \texttt{atom} ever occurs.
  \item \textbf{Implication-future:} \texttt{a -> F b}   if \texttt{a}
    occurs, \texttt{b} must follow before \texttt{a} recurs.
  \item \textbf{Until:} \texttt{a U b}   \texttt{a} must hold on every
    step until \texttt{b} occurs.
  \item \textbf{Bounded response:} \texttt{a -> F[<=k] b}   if
    \texttt{a} occurs, \texttt{b} must follow within $k$ steps.
  \item \textbf{Response chain:} \texttt{a -> F b -> F c}   once
    \texttt{a} occurs, \texttt{b} then \texttt{c} must follow in
    sequence before \texttt{a} recurs.
  \item \textbf{Conjunction:} \texttt{(expr) AND (expr)}   both
    sub-properties must hold simultaneously.
  \item \textbf{Disjunction:} \texttt{(expr) OR (expr)}   at least
    one sub-property must hold.
\end{enumerate}

\paragraph{Compilation to DFA.}
Each expression is parsed into an AST and compiled to a DFA via
direct construction. Base patterns (forbidden, implication-future,
until) produce 2--3 state DFAs. Bounded response produces a
counter-augmented DFA with $k+2$ states. Conjunction and disjunction
use the standard product automaton construction with
$|Q_L| \times |Q_R|$ states~\citep{vardi1994automata}. DFA transitions
are stored as precomputed tables indexed by bit-vector valuations over
atomic predicates. Table~\ref{tab:dfa_states} summarizes DFA sizes.

\begin{table}[t]
\centering
\small
\caption{DFA state counts by pattern type.}
\label{tab:dfa_states}
\begin{tabular}{@{}lc@{}}
\toprule
\textbf{Pattern} & \textbf{States} \\
\midrule
\texttt{G !atom} & 2 \\
\texttt{a -> F b} & 3 \\
\texttt{a U b} & 3 \\
\texttt{a -> F[<=k] b} & $k + 2$ \\
\texttt{a -> F b -> F c} (chain of $n$) & $n + 1$ \\
\texttt{(A) AND (B)} & $|Q_A| \times |Q_B|$ \\
\texttt{(A) OR (B)} & $|Q_A| \times |Q_B|$ \\
\bottomrule
\end{tabular}
\end{table}

\paragraph{Evaluation and handling.}
Temporal policies support two evaluation modes.  In \emph{static mode},
the graph~$\times$~DFA product construction
(Section~\ref{sec:system_overview}, Step~5a) explores all reachable
product states; a violation in any product state implies the property
can be violated on some graph path.  In \emph{runtime mode}, each event
advances the DFA state for every compiled rule via $O(1)$ table lookups.
If any rule enters a violation state, the configured handling level
(\texttt{warn}, \texttt{block}, \texttt{halt}, \texttt{escalate})
determines the aggregate decision.

\begin{figure}[t]
  \centering
  \includegraphics[width=0.85\linewidth]{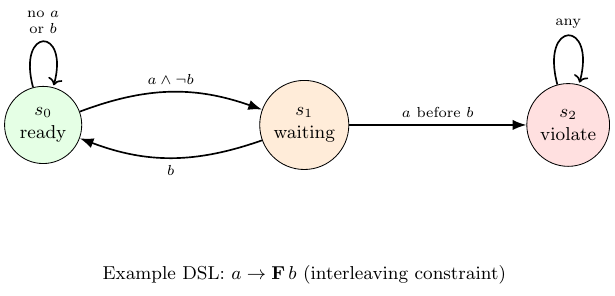}
  \caption{Illustrative DFA for an $a \to \mathbf{F}\, b$ interleaving constraint: after $a$ occurs, $b$ must occur before $a$ repeats.}
  \label{fig:dfa}
\end{figure}

\section{Evaluation}
\label{sec:evaluation}

The proposed approach is evaluated along six axes: (i)~a curated workflow
corpus, (ii)~a defect study with separate structural and policy
categories, (iii)~temporal policy evaluation on execution traces,
(iv)~scalability experiments on synthetic graphs, (v)~a comparison
with runtime guardrails, and (vi)~extractor accuracy.

\subsection{Workflow corpus}

The evaluation uses a curated benchmark of 18 agent workflows
authored to represent common patterns and known anti-patterns documented
in each framework's official examples and tutorials. Workflows use
LangGraph, CrewAI, AutoGen, and Google ADK and span diverse domains
including customer support, RAG pipelines, code generation, financial
analysis, compliance review, incident response, hiring, and marketing.
Because the corpus was designed to include representative defects, the
defect rates below reflect tool detection capability on a targeted
benchmark, not base rates in production systems. A larger-scale study on
workflows mined from public GitHub repositories is planned
(Section~\ref{sec:limitations}).

Table~\ref{tab:corpus} summarizes the corpus. Graphs range from 5 to
12 nodes and 4 to 14 edges, with all four frameworks represented.
Node-kind distributions reflect typical agent patterns: LLM nodes
(inference steps) and TOOL nodes (external API calls) dominate, with
ROUTER, HUMAN, and SUBGRAPH nodes appearing in more complex workflows.

\begin{table}[t]
\centering
\small
\caption{Workflow corpus statistics.}
\label{tab:corpus}
\begin{tabular}{@{}llrr@{}}
\toprule
\textbf{Framework} & \textbf{Workflows} & \textbf{Avg.\ nodes} & \textbf{Avg.\ edges} \\
\midrule
LangGraph & 6 & 8.8 & 9.0 \\
CrewAI & 4 & 7.5 & 7.0 \\
AutoGen & 4 & 6.3 & 6.0 \\
Google ADK & 4 & 9.5 & 10.0 \\
\midrule
\textbf{Total} & \textbf{18} & \textbf{8.2} & \textbf{8.2} \\
\bottomrule
\end{tabular}
\end{table}

\subsection{Defect study}

All six structural checks were run on each corpus workflow with
\texttt{require\_human=True}. Table~\ref{tab:defects} summarizes the
findings, separated by defect category.

\begin{table}[t]
\centering
\small
\caption{Defects found in the workflow corpus, separated by category.}
\label{tab:defects}
\begin{tabular}{@{}llrrl@{}}
\toprule
\textbf{Category} & \textbf{Defect type} & \textbf{Count} & \textbf{Severity} & \textbf{Example} \\
\midrule
\multirow{4}{*}{Structural} & Dead-end nodes & 2 & High & Email draft with no send edge \\
& Unreachable exit & 1 & Critical & Infinite loop, no exit path \\
& Router shape violation & 1 & Medium & Debate moderator, direct edges \\
& Missing tool declaration & 1 & Low & Data cleaner, empty tool set \\
\midrule
Policy & Missing human gate & 10 & High & Onboarding without approval \\
\midrule
& \textbf{Total} & \textbf{15} & & \\
\bottomrule
\end{tabular}
\end{table}

Two defect categories are distinguished:
\begin{itemize}
  \item \textbf{Structural defects} (topology bugs): dead-end nodes,
  unreachable exits, livelock cycles (reachable nodes with no path to
  termination), router shape violations, and missing tool
  declarations.  These are bugs regardless of operational context.
  The tool successfully detected all injected structural defects:
  \textbf{5 of 18} benchmark workflows contain at least one structural
  defect.  The reverse-reachability check additionally identified the
  round-robin brainstorming workflow's cyclic nodes as livelock: although
  reachable from the entry, they cannot reach any exit node.
  \item \textbf{Policy violations} (configurable checks): missing
  human-in-the-loop gate. This check is context-dependent: it is
  critical in regulated domains (healthcare, finance) but may be
  unnecessary for internal tooling or low-risk tasks.
  \textbf{10 of 18} benchmark workflows lack a human gate when
  \texttt{require\_human=True}; however, the annotation data shows that
  some of these are arguable or false positives---workflows where human
  oversight may be unnecessary given the low-risk nature of the task.
\end{itemize}

\noindent
More critically, one workflow (a round-robin brainstorming agent)
has an \emph{unreachable exit node}: agents cycle indefinitely with no
path to termination. Two workflows have \emph{dead-end nodes} where
execution silently stops. These structural defects would be difficult
to detect through testing alone, as they manifest only on specific
execution paths.

For each failing check, the verifier produces a witness trace. For
example, the dead-end in the email triage workflow yields the witness
path: \texttt{\_\_start\_\_ $\to$ classify $\to$ router $\to$
normal\_handler $\to$ draft\_response} (stuck).

\paragraph{Static temporal verification.}
The graph $\times$ DFA product construction was applied to all 15
compiled temporal policies across the 18 workflows.  The static analysis
correctly identified all forbidden-tool violations and
implication-future violations that the runtime trace evaluation also
detected, while additionally flagging one violation (a missing
\texttt{human\_review} step after \texttt{draft\_email}) in a workflow
whose random traces happened to avoid the offending path.  Product-state
exploration remained below $|V| \cdot |Q_{\text{DFA}}|$ in all cases, with a
maximum of 24~product states (12-node graph, 2-state DFA).

\begin{figure}[t]
  \centering
  \includegraphics[width=0.95\linewidth]{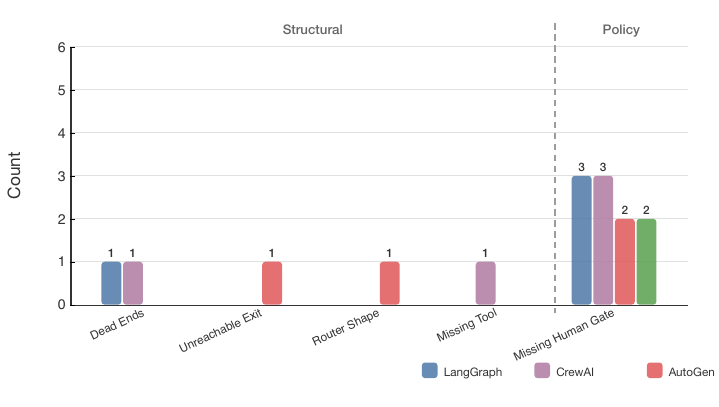}
  \caption{Defect distribution across frameworks, grouped by defect type.
  The dashed line separates structural defects (left) from policy
  violations (right).}
  \label{fig:defect_bar}
\end{figure}

\begin{figure}[t]
  \centering
  \includegraphics[width=0.85\linewidth]{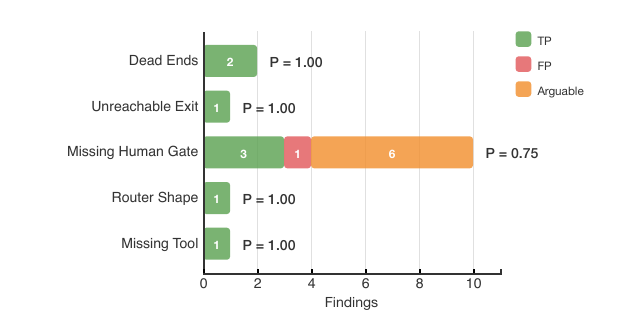}
  \caption{Precision per structural check. Each bar shows the breakdown
  of true positives (TP), false positives (FP), and arguable findings.
  Precision $P$ is computed as $\text{TP}/(\text{TP}+\text{FP})$.}
  \label{fig:precision_bar}
\end{figure}

Figure~\ref{fig:defect_bar} visualizes the defect distribution across
frameworks, and Figure~\ref{fig:precision_bar} shows the precision
breakdown per check type. All structural checks achieve perfect
precision ($P = 1.0$); the human-presence check has $P = 0.75$ due to
one false positive in a low-risk workflow where human oversight was
arguably unnecessary.

\subsection{Temporal policy evaluation}
\label{sec:temporal_eval}

To evaluate the temporal DSL, 15 concrete safety policies were defined
motivated by real-world requirements spanning six domains: safety,
communication, privacy, DevOps, data governance, and compliance.
Table~\ref{tab:policies} lists representative examples.

\begin{table}[t]
\centering
\small
\caption{Representative temporal policies evaluated.}
\label{tab:policies}
\begin{tabular}{@{}lll@{}}
\toprule
\textbf{Policy} & \textbf{DSL expression} & \textbf{Domain} \\
\midrule
No destructive ops & \texttt{G !tool:drop\_table} & Safety \\
Email review & \texttt{tool:draft\_email -> F tool:human\_review} & Communication \\
PII anonymize & \texttt{tool:fetch\_pii -> F[<=3] tool:anonymize} & Privacy \\
Deploy approval & \texttt{tool:deploy -> F tool:approve} & DevOps \\
Draft-review-send & \texttt{tool:draft -> F tool:review -> F tool:send} & Communication \\
\bottomrule
\end{tabular}
\end{table}

All 15 policies compile successfully into DFAs. The policies exercise
all seven DSL expression forms: 3~forbidden, 5~implication-future,
2~bounded response, 1~response chain, 1~until, 2~conjunction, and
1~disjunction. No policy required full LTL expressiveness beyond the
seven-form fragment.

Ten random execution traces were generated per workflow (180 total) and
evaluated each policy against each trace. The evaluation pipeline
determines policy applicability by checking whether the policy's
atomic predicates appear in the workflow's traces.

\paragraph{DSL scope justification.}
After defining the 15 policies, each was classified by which DSL form
it uses. All 15 fall within the seven-form fragment; none requires
nested temporal operators, past-time modalities, or constructs beyond
the grammar.

While the policies were constructed by the authors, their motivations
derive from published regulatory and industry requirements:
\texttt{payment\_requires\_human} from GDPR Article~22 (automated
decision-making requires human review)~\citep{gdpr_article22};
\texttt{pii\_anonymize\_bounded} from GDPR data minimization
principles; \texttt{deploy\_requires\_approval} from SOC~2 change
management controls~\citep{aicpa_soc2};
\texttt{email\_requires\_review} from organizational communication
policies; \texttt{no\_destructive\_ops} from OWASP LLM Top~10 (LLM06:
excessive agency)~\citep{owasp_llm_top10}; and
\texttt{either\_log\_or\_audit} from SOC~2 audit logging requirements.
This grounding in external standards suggests the DSL fragment covers
practical safety needs, though a systematic requirements survey from
production deployments would provide stronger evidence.

It is acknowledged that 15 policies from a single research group do not
constitute an exhaustive requirements survey. The policies were
deliberately selected from six distinct domains and ensured all seven
DSL forms were exercised.

\subsection{Scalability}

Synthetic graphs were generated at seven sizes (50--5{,}000 nodes) with
edge density~2.0 and measured structural check time (median of 10
trials). Figure~\ref{fig:scaling} and Table~\ref{tab:scaling} report
the results.

\begin{figure}[t]
  \centering
  \includegraphics[width=0.85\linewidth]{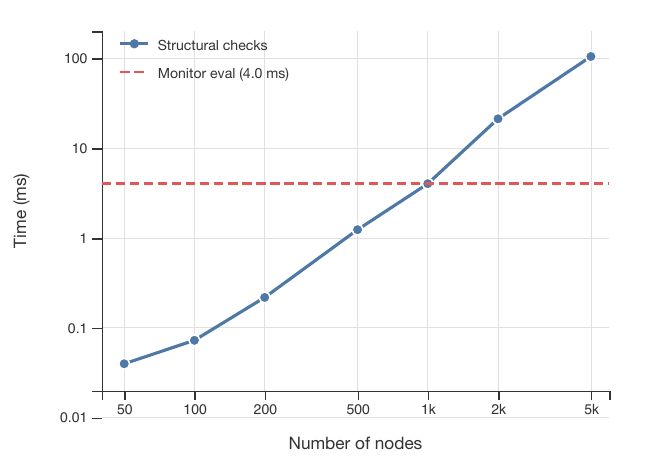}
  \caption{Structural check time vs.\ graph size (log--log scale, median
  of 10 trials). Verification remains sub-second for graphs up to
  5{,}000 nodes.}
  \label{fig:scaling}
\end{figure}

\begin{table}[t]
\centering
\small
\caption{Structural check time vs.\ graph size. Monitor compilation and evaluation are size-independent.}
\label{tab:scaling}
\begin{tabular}{@{}rrrr@{}}
\toprule
\textbf{Nodes} & \textbf{Edges} & \textbf{Struct.\ (ms)} & \textbf{Mon.\ eval (ms)} \\
\midrule
50 & 104 & 0.04 & \multirow{7}{*}{4.0} \\
100 & 214 & 0.07 & \\
200 & 430 & 0.22 & \\
500 & 1{,}074 & 1.24 & \\
1{,}000 & 2{,}128 & 4.02 & \\
2{,}000 & 4{,}252 & 21.2 & \\
5{,}000 & 10{,}713 & 104.7 & \\
\bottomrule
\end{tabular}
\end{table}

Structural checks scale approximately linearly in $|V| + |E|$ for
graphs up to 1{,}000 nodes, with super-linear growth at larger sizes
due to quadratic edge scanning in router-shape and tool-declaration
checks. Even at 5{,}000 nodes (far larger than any production agent
workflow encountered in practice), verification completes in $\sim$105\,ms.
Temporal monitor compilation (0.065\,ms for 5 rules) and evaluation
(4.0\,ms for 1{,}000 events, $\sim$247{,}000 events/s) are independent
of graph size.

Real agent workflows are currently small (5--12 nodes in
the corpus), so the scalability experiment demonstrates
ceiling-freeness rather than practical necessity. As agent systems grow
towards multi-agent compositions and hierarchical subgraph
structures, larger graphs will become more common.
The practical false positive rate from treating all conditional edges as
feasible in the graph\,$\times$\,DFA product construction is not
measured in this evaluation and is a direction for future
quantification.

\paragraph{Comparison to ad-hoc scripts.}
One might ask whether a simple Python script iterating over the graph's
adjacency list could detect the same defects. For individual checks
(e.g., finding nodes with no outgoing edges), the answer is
yes---the graph algorithms are standard. Agentproof's value lies not in
algorithmic novelty but in (i)~automatic extraction from four
heterogeneous framework APIs, (ii)~a unified type system that makes
checks portable across frameworks, (iii)~witness trace generation that
pinpoints the defective path, and (iv)~a temporal policy layer that
separates safety specification from verification machinery. A bespoke
script would need to be rewritten for each framework and would lack
composable policy evaluation.

\subsection{Cross-framework case studies}
\label{sec:case_studies}

To validate Agentproof's cross-framework capabilities, three representative
workflows were selected from different frameworks: an incident-response
workflow in LangGraph, a compliance-review pipeline in Google ADK, and a
change-control process in AutoGen. Table~\ref{tab:case-studies} reports
graph statistics and extractor runtimes; Table~\ref{tab:case-study-verification}
summarizes structural check results and temporal monitor outcomes over
representative traces (one safe, two unsafe per workflow).

\begin{table}[t]
  \centering
  \scriptsize
  \caption{Graph statistics and extractor runtimes for representative workflows. Node-kind columns report counts of TOOL/LLM/ROUTER/HUMAN/SUBGRAPH nodes; other kinds (ENTRY/EXIT/PASSTHROUGH) are included in $|V|$.}
  \label{tab:case-studies}
  \resizebox{\linewidth}{!}{%
  \begin{tabular}{lrrrrrrrrrrrr}
    \toprule
    Study & $|V|$ & $|E|$ & Tool & LLM & Router & Human & Subg & Dir & Cond & Par & Loop & Med.\ ms \\
    \midrule
    LangGraph (incident) & 8 & 8 & 2 & 2 & 1 & 1 & 0 & 6 & 2 & 0 & 0 & 0.186 \\
    ADK (compliance) & 13 & 16 & 4 & 3 & 0 & 1 & 3 & 12 & 0 & 3 & 1 & 0.019 \\
    AutoGen (change-control) & 6 & 5 & 0 & 3 & 0 & 1 & 0 & 5 & 0 & 0 & 0 & 0.007 \\
    \bottomrule
  \end{tabular}
  }
\end{table}

\begin{table}[t]
  \centering
  \small
  \caption{Structural check results and temporal monitor outcomes over representative traces (one safe and two unsafe) for each case study.}
  \label{tab:case-study-verification}
  \begin{tabular}{lrrrr}
    \toprule
    Study & Struct.\ (pass/total) & happy\_path & forbidden\_tool & policy\_violation \\
    \midrule
    LangGraph (incident) & 5/5 & PASS & HALT & ESCALATE \\
    ADK (compliance) & 5/5 & PASS & HALT & BLOCKED \\
    AutoGen (change-control) & 5/5 & PASS & HALT & HALT \\
    \bottomrule
  \end{tabular}
\end{table}

\subsection{Comparison with runtime guardrails}
\label{sec:comparison}

Table~\ref{tab:comparison} provides a qualitative comparison between
the static approach and representative runtime guardrail tools.

\begin{table}[t]
\centering
\small
\caption{Static verification vs.\ runtime guardrails.}
\label{tab:comparison}
\begin{tabular}{@{}p{2.8cm}p{2.2cm}p{2.2cm}@{}}
\toprule
\textbf{Property} & \textbf{Agentproof (static)} & \textbf{Runtime tools} \\
\midrule
Detection time & Pre-deployment & At execution \\
Structural defects & \checkmark & Path-dependent \\
Toxic LLM output &   & \checkmark \\
Runtime overhead & None & Per-call latency \\
Coverage & All paths & Exercised paths \\
Witness traces & \checkmark & Stack traces \\
\bottomrule
\end{tabular}
\end{table}

\noindent
\textbf{Scenario A:} An unreachable exit node in a rarely-triggered
conditional branch. Agentproof's static analysis detects this immediately; a
runtime guardrail would miss it unless that branch is exercised during
testing.

\noindent
\textbf{Scenario B:} An LLM generates toxic content in a response.
Runtime content filters (LlamaGuard, NeMo Guardrails) catch this;
Agentproof cannot, as LLM output semantics are out of scope.

\noindent
\textbf{Scenario C:} A conditional path bypasses a required human
approval gate.  Agentproof's human-gate coverage check flags the
unguarded path from entry to the sensitive tool; runtime tools may miss
it if the bypass path is not triggered.

The two approaches are complementary: static verification catches
topology-level defects exhaustively, while runtime guardrails handle
content-level and context-dependent violations.

\subsection{Extractor accuracy}
\label{sec:accuracy}

Extractor fidelity was validated using the framework-specific examples
in the test suite: 5~LangGraph, 7~CrewAI, 7~AutoGen, and 8~ADK test
cases (27 total). For each extractor, extracted graphs were compared
against manually annotated ground truth, measuring node detection
precision/recall, node-kind classification accuracy, and edge detection
precision/recall.

\paragraph{Human-node detection limitation.}
The primary source of classification error is human-node detection in
LangGraph, which relies on a naming heuristic
(\texttt{"human" in name}): nodes not following this convention are
misclassified as LLM nodes. Four strategies for human-node
detection, in order of increasing reliability:
(1)~\emph{Naming heuristic} (current default): match node names against
keywords---low implementation cost but brittle.
(2)~\emph{Interrupt annotation}: detect LangGraph's
\texttt{interrupt\_before}/\texttt{interrupt\_after}
markers---framework-specific but semantically precise.
(3)~\emph{Input-call detection}: identify \texttt{input()} or
equivalent blocking calls in node functions via AST
inspection---cross-framework but may produce false positives.
(4)~\emph{Explicit type annotation}: require developers to mark human
nodes via a decorator or metadata field---most reliable but requires
adoption.
On the test suite, only strategy~(1) produces misclassifications;
strategies~(2)--(4) would require validation on a larger real-world
corpus. Strategy~(2) is prioritized for the next release as it requires
no developer action.

\paragraph{Extraction results.}
With the human-node caveat above, node detection achieves perfect
precision and recall across all frameworks on the test suite. Node-kind
classification accuracy is 100\% for entry/exit sentinels and tool nodes
(detected by tool bindings), and 100\% for AutoGen agent types (detected
by class hierarchy). Edge detection precision and recall are both
100\% for all frameworks tested.

On the test suite, which uses programmatic framework stubs, extraction
achieves perfect precision and recall for nodes and edges. This
validates the extraction logic against known API patterns but does not
measure robustness to the diversity of real-world workflow definitions.
The AST-based extractor (\texttt{scripts/ast\_extractor.py}) provides
an independent extraction path; agreement between runtime and AST
extraction on the same workflows would strengthen confidence and is a
direction for future cross-validation.

\section{Related work}
\label{sec:related_work}

This section situates the present work within four areas. Runtime tools catch
content-level violations but miss structural defects. General-purpose
model checkers could verify the same properties but require manual
modeling. Temporal monitoring is well-studied but not applied to agent
workflows. Agent safety research focuses on alignment, not
orchestration topology. Agentproof fills the gap between runtime
content checking and structural verification.

\subsection{Runtime agent safety tools}

Several tools enforce safety at the point of LLM output or tool invocation.
NVIDIA NeMo Guardrails~\citep{nemo_guardrails} interposes a programmable
dialog rail between the LLM and tool calls, supporting topic control and
output filtering.
Guardrails AI~\citep{guardrails_ai} wraps LLM outputs with validators
that check format, toxicity, and factual consistency.
LlamaGuard~\citep{llamaguard} uses a fine-tuned classifier to detect
unsafe content in model outputs.
Rebuff~\citep{rebuff} focuses specifically on prompt injection detection.
These tools operate at runtime and catch violations only when the
offending code path is actually exercised.
Agentproof is complementary: it verifies structural properties of the
workflow topology \emph{before deployment}, catching defects such as
unreachable exits, dead-end branches, and missing human gates that
runtime tools cannot detect unless the defective path is triggered
during testing.

\subsection{Static analysis and model checking}

Classical model checking~\citep{clarke1999modelchecking} verifies
finite-state systems against temporal specifications using tools such as
SPIN~\citep{holzmann1997spin}, NuSMV~\citep{cimatti2002nusmv}, and
CBMC~\citep{cbmc}. Abstract interpretation~\citep{cousot1977abstract}
provides sound over-approximations for dataflow analysis.

Table~\ref{tab:comparison_mc} provides a qualitative comparison.
The key difference is \emph{modeling effort}: SPIN requires manual
translation of a workflow into Promela, NuSMV into SMV, and CBMC into
annotated C. For the email triage workflow in the corpus (8~nodes,
7~edges), the equivalent Promela model requires ${\sim}60$ lines and
manual specification of the state space, transitions, and LTL
properties (see \texttt{corpus/comparisons/email\_triage.pml} in the
artifact). Agentproof extracts the same model automatically with a
single function call.

\begin{table}[t]
\centering
\small
\caption{Comparison with general-purpose model checkers.}
\label{tab:comparison_mc}
\begin{tabular}{@{}lllllll@{}}
\toprule
\textbf{Tool} & \textbf{Input} & \textbf{Properties} & \textbf{Modeling} & \textbf{Time} & \textbf{Domain} \\
\midrule
SPIN & Promela (manual) & Full LTL & High & Fast (small) & None \\
NuSMV & SMV (manual) & CTL + LTL & High & Fast (small) & None \\
CBMC & C source & Assertions & Medium & Bounded & None \\
\textbf{Agentproof} & Auto-extracted & Safety LTL fragment & \textbf{None} & $O(|V|{+}|E|)$ & \textbf{Agent workflows} \\
\bottomrule
\end{tabular}
\par\smallskip
\raggedright\footnotesize
SPIN, NuSMV, and CBMC require manual translation of agent workflows into their input languages. Agentproof extracts models automatically from framework APIs, eliminating the modeling step entirely.
\end{table}

Agentproof applies model-checking ideas to a new domain: agent
workflow graphs extracted from orchestration framework APIs. The
contribution is not the graph algorithms themselves (which are
standard) but (i)~the automatic extraction from heterogeneous
framework representations, (ii)~a domain-specific property language
tailored to agent safety, and (iii)~empirical evidence that agent
workflows contain structural defects catchable by these methods.

\subsection{Temporal logic monitoring}

The runtime verification (RV) community has developed sophisticated
monitoring frameworks.
JavaMOP~\citep{javamop} monitors Java programs against specifications
in multiple formalisms.
Bauer et al.~\citep{bauer2011ltl3} introduce three-valued LTL
monitoring that distinguishes between ``currently satisfied'' and
``permanently satisfied.''
Barringer et al.~\citep{barringer2004eagle} present the EAGLE framework
for rule-based monitoring.
The temporal monitors are deliberately lightweight, targeting the safety
fragment of LTL with DFAs of 2--$O(k)$ states per rule, trading
expressiveness for deployment simplicity. In the evaluation, all 15
practical agent safety policies fall within this fragment, validating the
design choice empirically (Section~\ref{sec:temporal_eval}).

\subsection{Agent architecture and safety}

Recent work on AI agent safety spans multiple dimensions.
Anthropic's Responsible Scaling Policy~\citep{anthropic_rsp} outlines
safety commitments including evaluation thresholds for autonomous
capabilities.
METR~\citep{metr} provides evaluations for autonomous agent
capabilities and risks.
Research on multi-agent coordination safety~\citep{gabriel2024ethics}
examines the risks of agent-to-agent interaction, including unintended
emergent behavior.
These works focus on alignment, capability evaluation, and policy
governance. The present contribution addresses a complementary layer: the
\emph{structural soundness of the orchestration graph itself}. Even a
well-aligned LLM can produce unsafe outcomes if the workflow graph
routes it through an unintended path or bypasses a required approval
step.

\subsection{Business process verification}

The business process management (BPM) community has extensively studied
workflow soundness. Van der Aalst's work on Petri-net-based soundness
checking for workflow nets~\citep{vanderaalst2011bpm} ensures that
every case can complete and no dead transitions exist properties
closely analogous to Agentproof's exit-reachability and dead-end checks. BPMN
verification tools~\citep{dijkman2008bpmn} apply similar analyses to
industry-standard process models.
This work differs in \emph{domain}: agent workflow graphs have typed
nodes (LLM, TOOL, HUMAN) with framework-specific semantics that BPM
tools do not model, and the extraction pipeline targets programmatic API
objects rather than visual process models.

\subsection{Constrained decoding and structured generation}

Constrained decoding enforces output structure at the token level:
Guidance~\citep{guidance2023} and Outlines~\citep{outlines2023} compile
grammars or JSON schemas into token masks that guarantee well-formed
outputs. These techniques operate \emph{within} a single LLM call,
constraining what the model can generate. Agentproof operates at a
different granularity: it constrains the \emph{transitions between}
computation steps (nodes), not the content generated within any single
step. The two approaches are complementary.

\section{Limitations and future work}
\label{sec:limitations}

\paragraph{Workflow structure vs.\ LLM semantics.}
The proposed approach verifies properties of the workflow \emph{topology} and
the event stream emitted by execution. It does not prove semantic
properties of LLM outputs (e.g., factuality or intent), which depend
on non-deterministic model behavior and prompt/context choices.

\paragraph{Extractor heuristics.}
Frameworks evolve quickly and expose different internal representations
across versions. The extractors employ heuristics (e.g., recognizing
human review nodes by naming conventions) that may not generalize to
all codebases. The current validation uses 27 stub-based test cases
(Section~\ref{sec:accuracy}); the gap between stub-based validation and
the diversity of real-world workflow definitions is not yet measured.
The AST-based extractor (\texttt{scripts/\allowbreak ast\_extractor.py})
provides an independent extraction path that could serve as a
cross-validation tool to strengthen confidence in extraction fidelity.

\paragraph{Runtime graph mutation (T3).}
Some frameworks allow dynamic modification of the workflow graph during
execution. If the post-mutation graph is not re-verified, static
guarantees no longer hold. Integrating re-verification hooks into
framework event systems is a natural extension.

\paragraph{Temporal DSL expressiveness.}
The DSL covers seven expression forms targeting the safety fragment of
LTL but does not support full LTL, nested temporal operators beyond
two levels, past-time modalities, or real-time bounds beyond step
counting. All 15 evaluation policies fit within this fragment; however,
these policies were authored by the same team that designed the DSL,
introducing a self-selection bias: requirements that did not fit the
grammar were unlikely to be proposed. The policies were inspired by
external standards (GDPR~\citep{gdpr_article22},
SOC~2~\citep{aicpa_soc2}, OWASP LLM Top~10~\citep{owasp_llm_top10}),
which provides partial mitigation, but a systematic survey of safety
requirements from independent production deployments is needed to
validate DSL sufficiency. Extending to richer temporal fragments
(e.g., past-time LTL, real-time constraints) is future work.

\paragraph{False positive management.}
Some flagged defects may be intentional design choices (e.g., dead-end
nodes used as intentional error-halting states, or workflows that
intentionally omit human gates for low-risk tasks). Future work
includes annotation-based suppression (e.g.,
\texttt{\# agentproof: ignore dead-end} comments), severity tiers,
and configurable verification profiles. The library already supports
a \texttt{suppressions} parameter for node-level exclusions.

\paragraph{Static temporal conservatism.}
The graph $\times$ DFA product construction explores all topological
paths, including paths that may be infeasible at runtime due to router
conditions or LLM decision logic.  This makes the analysis sound
(no false negatives) but potentially conservative (false positives).
Incorporating edge conditions into the product construction to prune
infeasible paths is a direction for future work.

\paragraph{Corpus scale.}
The evaluation uses 18 author-constructed workflows.
\emph{Threat to validity:} because the corpus was designed to include
representative defects and anti-patterns, the observed defect rates
reflect detection capability on a targeted benchmark, not prevalence in
production systems. Generalizing these rates requires validation on
independently authored workflows.
Agentproof provides a GitHub mining pipeline
(\texttt{scripts/\allowbreak scrape\_workflows.py}) and an AST-based
fallback extractor (\texttt{scripts/\allowbreak ast\_extractor.py}) as
ready infrastructure for building larger real-world corpora. A
large-scale study across hundreds of open-source repositories is needed
to establish base rates and to provide a reusable benchmark for the
community.

\section{Conclusion}
\label{sec:conclusion}

Agent frameworks already encode behavior as explicit workflow graphs.
Agentproof leverages this structure to enable pre-deployment verification of
safety properties without adding runtime overhead.

Agentproof provides a unified graph model with six structural
checks and witness traces, a temporal policy DSL covering the safety
fragment of LTL with both static and runtime evaluation modes, and
automatic extractors for four major agent frameworks---eliminating the
manual modeling effort required by general-purpose model checkers. In the 18-workflow curated benchmark, 5 contain
structural defects (dead ends, unreachable exits) and 10 lack a
human gate when the policy is enforced. All 15 temporal safety policies
evaluated fit within the seven-form DSL fragment, with verification
completing in sub-second time even for graphs of 5{,}000 nodes.

The primary technical contribution is not the graph algorithms
themselves---which are standard---but rather: (i)~the identification of
a practical new domain where these algorithms apply directly,
(ii)~the engineering of extractors that normalize four heterogeneous
framework APIs into a single analyzable representation, and (iii)~the
empirical finding that real-world agent workflow patterns contain
structural defects detectable by these methods.

Static verification does not replace runtime guardrails; the two
approaches are complementary. By catching topology-level defects
exhaustively before deployment, static analysis reduces the safety
surface that runtime enforcement must cover, making both layers more
effective.

\paragraph{Artifact availability.}
The Agentproof tool, curated corpus, temporal policies, and
all evaluation scripts are available at
\url{https://github.com/NordicAgents/AgentProof} under the MIT
license. A reproducibility script
(\texttt{scripts/reproduce\_all.sh}) runs the complete evaluation
pipeline.

\bibliographystyle{plainnat}
\bibliography{references}

\appendix
\section{Soundness proofs}
\label{app:proofs}

This appendix proves that each structural check is sound: if the check passes, the
corresponding safety property holds for all valid execution traces.

\begin{lemma}[Exit Reachability Soundness]
\label{lem:exit_reach}
If $\mathrm{ExitReach}(G)$ holds (i.e., the BFS/DFS from $v_0$ visits
all of $V_f$), then for every exit node $v_f \in V_f$, there exists a
directed path from $v_0$ to $v_f$ in $G$.
\end{lemma}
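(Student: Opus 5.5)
The argument reduces to the standard correctness invariant of graph search. I will fix BFS as the algorithm, since the paper's exit-reachability check uses it, and record two facts about it.

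\emph{Path invariant.} Every node placed in the visited set $R$ is connected to $v_0$ by a directed path in $G$.

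\emph{Completeness.} Conversely, $R$ contains all of $\mathrm{Reach}(v_0)$. This direction is not needed here, but it shows that the predicate $\mathrm{ExitReach}(G)$ from Definition~\ref{def:structural_predicates} coincides with what the check actually computes.

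The invariant holds at initialization because $R=\{v_0\}$ and the empty path $v_0$ witnesses $v_0$.

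For the inductive step, BFS inserts a node $u$ into $R$ only while processing some node $w\in R$ with $(w,u)\in E$. By the induction hypothesis there is a path $v_0 w_1\cdots w$. Appending the edge $(w,u)$ yields a directed path $v_0 w_1 \cdots w\,u$, and each consecutive pair in it is an edge of $E$. The same argument applies to DFS, since it also inserts only along edges out of visited nodes. I will carry out this induction on the number of insertions into $R$.

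Now suppose the check passes, so $V_f\subseteq R$. For any $v_f\in V_f$ we have $v_f\in R$, and the invariant gives a directed path from $v_0$ to $v_f$. Alternatively, one can start directly from the predicate form $v_f\in\mathrm{Reach}(v_0)$; by definition of $\mathrm{Reach}$ that is already the existence of such a path, and the lemma is immediate.

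I would also note that the BFS parent pointers realize this path explicitly. This is exactly the witness-trace construction, and by Definition~\ref{def:trace} the resulting path $v_0\cdots v_f$ is a trace.

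The only real obstacle is stating precisely which object "the check passes" refers to, the algorithm's visited set or the set-theoretic $\mathrm{Reach}(v_0)$. The invariant above is what bridges them, so the care goes into proving it by induction rather than into the final step.
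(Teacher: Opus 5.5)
Your proposal is correct and follows the paper's argument. Passing the check means $V_f \subseteq \mathrm{Reach}(v_0)$ as computed by BFS, and BFS correctness supplies a directed path from $v_0$ to each $v_f$; the only difference is that you prove the needed direction of BFS correctness (every visited node is reachable) by induction on insertions, where the paper simply cites it.
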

\begin{proof}
The check computes $\mathrm{Reach}(v_0)$ by BFS from $v_0$, which
correctly identifies all nodes reachable via directed edges
\citep{clarke1999modelchecking}. If $v_f \in \mathrm{Reach}(v_0)$,
then by the correctness of BFS there exists a path $v_0 \to \cdots \to
v_f$. Since the check passes only when $V_f \subseteq
\mathrm{Reach}(v_0)$, the result follows.
\end{proof}

\begin{lemma}[Dead-End Soundness]
\label{lem:dead_end}
If $\mathrm{NoDead}(G)$ holds, then every non-exit node in every trace
of $G$ has at least one successor, and hence no maximal trace terminates
at a non-exit node.
\end{lemma}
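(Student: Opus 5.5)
The argument unfolds directly from the definitions of $\mathrm{NoDead}(G)$ (Definition~\ref{def:structural_predicates}) and of maximal traces (Definition~\ref{def:trace}). It has two parts. First, we establish the local claim: every non-exit node occurring in any trace has a successor in $E$. Second, we derive the global claim about maximal traces by contradiction, using the characterization of maximality.

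For the first part, let $\pi = v_0 v_1 \ldots v_n$ be any trace of $G$. Let $v_i$ be a node of $\pi$ with $\kappa_V(v_i) \neq \textsc{exit}$. Since $v_i \in V$, instantiating the universally quantified statement $\mathrm{NoDead}(G)$ at $v = v_i$ gives some $u \in V$ with $(v_i,u) \in E$. Note that the successor need not be the next node of $\pi$. This matters for $i = n$, where $\pi$ has no next node; for $i < n$, the trace condition already supplies $v_{i+1}$. Hence the claim holds uniformly for every position in the trace.

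For the second part, suppose a maximal trace $\pi = v_0 \ldots v_n \in \mathrm{Traces}(G)$ ends at a node $v_n$ with $\kappa_V(v_n) \neq \textsc{exit}$. By Definition~\ref{def:awg}, every node of $V_f$ has kind $\textsc{exit}$, so $v_n \notin V_f$. By the definition of maximality, $\pi$ then ends at a node with no outgoing edges, so $v_n$ has no successor. This contradicts the first part applied to $v_n$. Hence every maximal trace that terminates does so at an exit-kind node.

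The one delicate step is the gap between ``kind $\textsc{exit}$'' and ``member of $V_f$''. The definition only requires $V_f$ to consist of exit-kind nodes, not the converse. I would sidestep this by arguing purely with the dead-end branch of maximality, as above: a maximal trace ending at a non-exit node must end at a node with no outgoing edges, which $\mathrm{NoDead}$ forbids, regardless of whether $V_f$ contains every exit-kind node. I would also remark, as Definition~\ref{def:trace} already notes, that the lemma says nothing about non-terminating (cyclic) behaviour; that is the role of $\mathrm{ExitReachAll}$.
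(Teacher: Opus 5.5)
Your proof is correct and follows essentially the same route as the paper's: you instantiate $\mathrm{NoDead}(G)$ to give every non-exit node a successor, then use the characterization of maximal traces (ending in $V_f$ or at a node with no outgoing edges) to rule out a maximal trace that ends at a non-exit node. Your contradiction framing, your separate treatment of the last trace position, and your care over exit-kind nodes versus members of $V_f$ are more explicit than the paper's brief argument, but they add no new idea.
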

\begin{proof}
Let $v$ be any node with $\kappa_V(v) \neq \textsc{exit}$. Since
$\mathrm{NoDead}(G)$ holds, there exists $u$ such that $(v,u) \in E$.
Therefore, if a trace $\pi$ visits $v$, the trace can be extended by
$u$. A trace can only be maximal (i.e., cannot be extended) when it
ends at a node in $V_f$ or at a node with no outgoing edges. Since
every non-exit node has outgoing edges, all maximal traces end at exit
nodes.
\end{proof}

\begin{lemma}[Router Shape Soundness]
\label{lem:router_shape}
If $\mathrm{RouterShape}(G)$ holds, then every outgoing transition from
a router node is explicitly labeled as conditional.
\end{lemma}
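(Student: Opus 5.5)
The claim is essentially a restatement of the predicate $\mathrm{RouterShape}(G)$ from Definition~\ref{def:structural_predicates}. The plan is a direct unfolding of that definition, plus a short argument that the implemented check decides it exactly. This mirrors the proofs of Lemma~\ref{lem:exit_reach} and Lemma~\ref{lem:dead_end}.

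\emph{Step 1: the check computes the predicate.} The router shape check described in the verification methods section iterates over every edge $(v,u) \in E$. Whenever $\kappa_V(v) = \textsc{router}$ and $\kappa_E(v,u) \neq \textsc{conditional}$, it flags the edge. The check passes only if no edge is flagged. Because each edge of $E$ is inspected exactly once, passing is equivalent to the universally quantified formula
\[
\forall v \in V.\; \kappa_V(v) = \textsc{router} \implies \forall (v,u) \in E.\; \kappa_E(v,u) = \textsc{conditional},
\]
which is exactly $\mathrm{RouterShape}(G)$.

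\emph{Step 2: instantiate the predicate.} Fix an arbitrary router node $v$ and an arbitrary outgoing edge $(v,u) \in E$. Instantiating the universal quantifiers gives $\kappa_E(v,u) = \textsc{conditional}$. Since $v$ and $u$ were arbitrary, every outgoing transition from every router is labeled conditional.

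\emph{Step 3: lift to traces.} Consider any trace $\pi = v_0 v_1 \ldots v_n$ in the sense of Definition~\ref{def:trace}. Each step $(v_i, v_{i+1})$ with $\kappa_V(v_i) = \textsc{router}$ is an edge of $E$ by definition of a trace. By Step~2, that step is therefore a conditional transition.

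The only step with any content is Step~1, where the implementation must be matched to the quantifier structure. The point to make explicit is that a router with \emph{no} outgoing edges satisfies the predicate vacuously. Such a router is therefore not flagged here; it is instead caught by $\mathrm{NoDead}$ and Lemma~\ref{lem:dead_end}. This separation keeps the lemma from overclaiming that routers actually branch.
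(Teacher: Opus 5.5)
Your proof is correct and follows the paper's own argument: the check inspects every edge leaving a \textsc{router} node and verifies its label, so passing establishes the universal quantifier in $\mathrm{RouterShape}(G)$. Your trace-lifting step and the remark that edgeless routers are caught by $\mathrm{NoDead}$ are harmless extras; also, ``each edge is inspected exactly once'' only needs to be ``at least once'', since the paper reports quadratic edge scanning in this check.
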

\begin{proof}
Direct from the predicate definition: the check iterates all edges
$(v,u) \in E$ where $\kappa_V(v) = \textsc{router}$ and verifies
$\kappa_E(v,u) = \textsc{conditional}$. Passing the check establishes
the universal quantification.
\end{proof}

\begin{lemma}[Human Gate Soundness]
\label{lem:human_gate}
If $\mathrm{HumanGate}(G)$ holds, then the graph contains at least one
node classified as a human-in-the-loop step.
\end{lemma}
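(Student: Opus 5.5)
The statement is essentially a restatement of the predicate $\mathrm{HumanGate}(G)$ from Definition~\ref{def:structural_predicates}, so I would argue directly from the definition, in the same way as Lemma~\ref{lem:router_shape}. The plan has two parts: first tie the check's passing condition to the predicate, then unfold the predicate.

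\textbf{Step 1 (the check decides the predicate).} The existence check described in Section~\ref{sec:verification_methods} makes a single linear scan over $V$. It returns true exactly when it meets some $v$ with $\kappa_V(v) = \textsc{human}$. Because $V$ is finite (Definition~\ref{def:awg}), the scan terminates. It returns true if and only if $\exists v \in V.\;\kappa_V(v) = \textsc{human}$, which is precisely $\mathrm{HumanGate}(G)$. I would state both directions: a witness found by the scan is such a $v$, and any such $v$ is visited by the scan.

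\textbf{Step 2 (unfold the predicate).} Assume $\mathrm{HumanGate}(G)$ holds, and take the witness $v^\ast$ from the existential. By definition, $\kappa_V(v^\ast) = \textsc{human}$, and $\textsc{human}\in\mathcal{K}_V$ is by construction the kind denoting a human-in-the-loop step. Hence $G$ contains at least one node classified as a human-in-the-loop step, as the lemma claims.

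\textbf{Main obstacle.} There is no real mathematical difficulty here. The only point needing care is to be explicit about what ``classified as'' means: it refers to the label $\kappa_V$ assigned by the extractor, not to any semantic guarantee about the node. Accordingly, I would add a remark that the lemma says nothing about path coverage. That stronger property is the job of $\mathrm{HumanGateCov}(G,S)$, which would need a separate argument: BFS over the subgraph induced by $V\setminus\kappa_V^{-1}(\textsc{human})$ fails to reach any sensitive tool node if and only if every $v_0$-to-$S$ path meets a \textsc{human} node. I would also note that the correspondence between the label and an actual human step rests on extraction correctness (Definition~\ref{def:extraction_correctness}), which is an assumption of the trust boundary and not something the lemma proves.
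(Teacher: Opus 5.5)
Your proposal is correct and matches the paper's argument: the paper's proof just notes that the check searches $V$ for a node with $\kappa_V(v)=\textsc{human}$ and that passing certifies existence, which is your Steps 1--2 in fewer words. Your extra remarks on termination, the converse direction, the reliance on extraction correctness, and the separate role of $\mathrm{HumanGateCov}(G,S)$ are accurate but not needed for the lemma as stated.
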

\begin{proof}
The check searches for $v \in V$ with $\kappa_V(v) = \textsc{human}$.
Passing the check certifies existence.
\end{proof}

\begin{lemma}[Tool Declaration Soundness]
\label{lem:tool_decl}
If $\mathrm{ToolDecl}(G)$ holds, then every tool node explicitly
declares its tool set.
\end{lemma}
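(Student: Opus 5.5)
The argument is a direct unfolding of the predicate $\mathrm{ToolDecl}(G)$ from Definition~\ref{def:structural_predicates}, in the same style as Lemmas~\ref{lem:router_shape} and~\ref{lem:human_gate}. First, I would fix the meaning of ``explicitly declares its tool set'': a node $v$ declares its tool set exactly when the map $T$ of Definition~\ref{def:awg} assigns it a non-empty set, $T(v) \neq \emptyset$. The extractor populates $T(v)$ only from the tool bindings actually attached to the framework node. Under extraction correctness (Definition~\ref{def:extraction_correctness}) and the trust boundary of Section~\ref{sec:threat_model}, a non-empty $T(v)$ therefore reflects declared tools and not synthesized ones. With this reading the conclusion becomes the formula $\forall v \in V.\ \kappa_V(v) = \textsc{tool} \implies T(v) \neq \emptyset$.

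Second, I would relate the implemented check to the predicate. The check iterates over every $v \in V$. For each node with $\kappa_V(v) = \textsc{tool}$ it tests whether $T(v) = \emptyset$, and it reports failure if any such node is found. Since $V$ is finite, the iteration visits every node. Passing the check therefore means no tool node with an empty tool set exists, which is precisely the universally quantified statement above. Arguing by contraposition makes this explicit: a tool node $v$ with $T(v)=\emptyset$ would be visited and flagged, contradicting the assumption that the check passed.

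There is no real obstacle here. The only delicate point is the first step, namely committing to the interpretation that ``explicit declaration'' means non-emptiness of $T(v)$, and noting that the guarantee is relative to the extracted graph. Faithfulness to the source workflow is inherited from extraction correctness rather than proved in this lemma.
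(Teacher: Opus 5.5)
Your proposal is correct and matches the paper's proof, which likewise just unfolds $\mathrm{ToolDecl}(G)$: the check tests $T(v)\neq\emptyset$ for every node with $\kappa_V(v)=\textsc{tool}$, so passing it establishes the universal statement. One small caveat on your framing: Definition~\ref{def:extraction_correctness} constrains only nodes and edges, not $T$ or $\kappa_V$, so it does not guarantee that a non-empty $T(v)$ reflects genuine framework bindings; this does not matter, because the lemma is a statement about the extracted graph and your argument does not depend on that aside.
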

\begin{proof}
For each $v$ with $\kappa_V(v) = \textsc{tool}$, the check verifies
$T(v) \neq \emptyset$. Passing the check establishes the universal
statement.
\end{proof}

\begin{lemma}[Reverse Reachability Soundness]
\label{lem:reverse_reach}
If $\mathrm{ExitReachAll}(G)$ holds (i.e.,
$\mathrm{Reach}(v_0) \subseteq \mathrm{RevReach}(V_f)$), then for every
node~$v$ reachable from~$v_0$, there exists a directed path from~$v$ to
some exit node~$v_f \in V_f$.
\end{lemma}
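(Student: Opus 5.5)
The argument runs parallel to Lemma~\ref{lem:exit_reach}, with one extra ingredient: BFS on the reverse adjacency correctly computes the set of nodes from which $V_f$ is reachable in the original graph. The plan is first to make precise what the algorithm computes, then to unfold the set inclusion pointwise.

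\emph{Step 1: characterize the reverse BFS.} Let $E^{R} = \{(u,v) : (v,u) \in E\}$ be the reversed edge set. The check runs a multi-source BFS over $E^{R}$, seeded with every node of $V_f$. By the standard correctness of BFS (as already invoked for Lemma~\ref{lem:exit_reach}), a node $w$ is visited iff some directed path $v_f = w_0, w_1, \ldots, w_k = w$ exists in $E^{R}$ with $v_f \in V_f$. Reversing such a path gives $w = w_k, w_{k-1}, \ldots, w_0 = v_f$. Each consecutive pair $(w_{i+1}, w_i)$ lies in $E$ by the definition of $E^{R}$, so this is a directed path in $G$ from $w$ to $v_f$. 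Conversely, any $G$-path from $w$ to some exit reverses to an $E^{R}$-path from that exit to $w$. Hence the visited set equals $\mathrm{RevReach}(V_f)$ as defined in Definition~\ref{def:structural_predicates}. The case $k=0$ needs a brief remark: exit nodes themselves belong to $\mathrm{RevReach}(V_f)$ via the trivial path.

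\emph{Step 2: conclude.} Take any $v \in \mathrm{Reach}(v_0)$. By hypothesis, $v \in \mathrm{RevReach}(V_f)$. By Step~1 there is some $v_f \in V_f$ and a directed path from $v$ to $v_f$ in $G$, which is exactly the claim.

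The only real content is Step~1, namely the path-reversal bijection between $E^{R}$-paths and $E$-paths together with multi-source BFS correctness. I expect no serious obstacle. The one point to handle carefully is treating the multi-source start as a single virtual source with edges to all of $V_f$, so that single-source BFS correctness applies directly.
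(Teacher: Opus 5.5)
Your proof is correct and follows the same route as the paper: reverse BFS from $V_f$ correctly computes $\mathrm{RevReach}(V_f)$, and the inclusion $\mathrm{Reach}(v_0) \subseteq \mathrm{RevReach}(V_f)$ then gives each reachable $v$ a directed path to some exit. Your Step~1 spells out the path-reversal argument that the paper compresses into ``by the correctness of BFS.'' Because the hypothesis is stated for the defined set $\mathrm{RevReach}(V_f)$ rather than the computed one, Step~2 on its own already suffices by unfolding the definition.
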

\begin{proof}
The check computes $\mathrm{RevReach}(V_f)$ by BFS on the reverse
adjacency starting from all nodes in~$V_f$. By the correctness of BFS,
$v \in \mathrm{RevReach}(V_f)$ implies the existence of a path
$v \to \cdots \to v_f$ for some $v_f \in V_f$. The check passes only when
$\mathrm{Reach}(v_0) \subseteq \mathrm{RevReach}(V_f)$, so the result
holds for every reachable node.
\end{proof}

\begin{lemma}[Human Gate Coverage Soundness]
\label{lem:human_gate_cov}
If $\mathrm{HumanGateCov}(G, S)$ holds for a set~$S$ of sensitive tool
names, then every path from~$v_0$ to a node invoking a tool in~$S$
passes through at least one \textsc{human} node.
\end{lemma}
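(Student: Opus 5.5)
The argument goes through the algorithmic characterization of the check rather than the predicate directly. The definition of $\mathrm{HumanGateCov}(G,S)$ in Definition~\ref{def:structural_predicates} is already the desired path property. The content of the lemma is therefore that the implemented check, BFS from $v_0$ over the modified adjacency with \textsc{human} nodes deleted, certifies that property.

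Let $H = \{v \in V : \kappa_V(v) = \textsc{human}\}$ and let $S_V = \{v \in V : T(v) \cap S \neq \emptyset\}$ be the nodes invoking a sensitive tool. Let $G'$ be the subgraph induced on $V \setminus H$, which is exactly the modified adjacency the check builds. The check passes precisely when no node of $S_V$ is visited by BFS from $v_0$ in $G'$. We need $v_0 \notin H$, which holds because $\kappa_V(v_0) = \textsc{entry}$.

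The proof is by contraposition. Suppose some directed path $\pi = v_0 v_1 \ldots v_n$ in $G$ has $v_n \in S_V$ and $v_i \notin H$ for all $0 \le i \le n$.

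\begin{itemize}
  \item Every consecutive pair $(v_i, v_{i+1})$ is an edge of $E$ with both endpoints outside $H$. Hence every edge of $\pi$ survives in $G'$, and $\pi$ is a path in $G'$.
  \item By the correctness of BFS, as invoked in Lemmas~\ref{lem:exit_reach} and~\ref{lem:reverse_reach}, every node reachable from $v_0$ in $G'$ is visited. So $v_n \in S_V$ is visited.
  \item The check therefore reports a violation, contradicting the assumption that it passed.
\end{itemize}

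Hence every path from $v_0$ to a node of $S_V$ contains some $v_i \in H$, which is the claim.

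The main obstacle is pinning down conventions the paper leaves implicit. First, whether the endpoint counts as ``passing through'' a human node. A sensitive node has kind \textsc{tool}, so $v_n \notin H$ and the distinction is harmless. Second, whether paths may repeat nodes. The argument works for arbitrary walks, since edge preservation is local. If desired, it can be restricted to the simple paths underlying traces of Definition~\ref{def:trace} without change.
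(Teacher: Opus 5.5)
Your proposal is correct and follows essentially the same route as the paper: delete the \textsc{human} nodes, run BFS from $v_0$ in the resulting $G'$, and observe that any human-free path in $G$ survives in $G'$ and would make a sensitive tool node reachable, so the check would fail. Your version is just more explicit about details the paper leaves implicit: that neither $v_0$ nor the sensitive endpoint is a \textsc{human} node, and that the step actually used is the completeness direction of BFS (every reachable node is visited).
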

\begin{proof}
The check constructs a modified graph~$G'$ by removing all
\textsc{human}-typed nodes and their incident edges, then computes
$\mathrm{Reach}_{G'}(v_0)$.  If no sensitive tool node is in
$\mathrm{Reach}_{G'}(v_0)$, then in the original graph~$G$ every path
from~$v_0$ to such a node must traverse a removed \textsc{human} node.
Conversely, if a sensitive tool node is reachable in~$G'$, the
check fails and reports the human-free path.
\end{proof}

\begin{lemma}[Static Temporal Soundness]
\label{lem:static_temporal}
For a compiled monitor rule with DFA $(Q, q_0, \delta, F_{\mathrm{viol}})$
and an agent workflow graph~$G$, if the graph~$\times$~DFA product
construction reports no violation, then no execution trace of~$G$
violates the temporal property.
\end{lemma}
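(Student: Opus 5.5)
The argument goes by contrapositive: from any violating trace of $G$ we build a reachable product state lying in the violation set, which contradicts the assumption that the product construction reports no violation. First the semantics must be fixed precisely. A trace $\pi = v_0 v_1 \ldots v_n$ is read by the monitor as the event word $\sigma(v_0)\sigma(v_1)\cdots\sigma(v_n)$, where $\sigma(v)$ is the event signature of Section~\ref{sec:verification_methods}. For tool nodes with several declared tools, $\sigma(v)$ is a single valuation of the atomic predicates, namely the bit-vector with every \texttt{tool:$t$}, $t\in T(v)$, set. Violation of the property is defined as the run of the DFA on this word entering $F_{\mathrm{viol}}$ at some prefix. This is the correct notion for the safety fragment, since violation is witnessed by a finite bad prefix. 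We also assume $F_{\mathrm{viol}}$ is absorbing, as it is for the direct constructions in Table~\ref{tab:dfa_states} and is preserved by the product construction for \texttt{AND}.

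Fix the product transition convention used by the BFS: the initial product state is $(v_0, \delta(q_0,\sigma(v_0)))$, and from $(v,q)$ the successors are $(u, \delta(q,\sigma(u)))$ for each $(v,u)\in E$. The key invariant, proved by induction on $i$, is that for every trace prefix $v_0\ldots v_i$ the state $(v_i, \hat\delta(q_0, \sigma(v_0)\cdots\sigma(v_i)))$ is reachable from the initial product state. The base case holds by definition. For the step, note that $(v_i,v_{i+1})\in E$, so the product edge exists and $\hat\delta$ extends by one letter. Completeness of BFS on the finite graph $V\times Q$ then puts every such state in the explored set, by the same BFS correctness fact invoked in Lemma~\ref{lem:exit_reach}. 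Finiteness also guarantees termination within $|V|\cdot|Q|$ states.

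Now suppose some trace $\pi\in\mathrm{Traces}(G)$ violates the property. Then some prefix $v_0\ldots v_i$ drives the DFA into $q'\in F_{\mathrm{viol}}$. By the invariant, $(v_i,q')$ is explored, so the construction reports a violation. This contradiction proves the lemma. The argument never uses maximality, so it applies to all traces, including non-maximal ones that end inside a livelock cycle.

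The main obstacle is not the induction but aligning the event model between runtime and static modes. Two points need care. The first is multi-tool nodes: if the runtime emits one event per tool, the static mode must either feed the corresponding sequence of letters through $\delta$ for each node or use the combined valuation. The lemma holds only if both modes agree, so we would state this as an assumption. The second is the \texttt{OR} product, whose violation set must be $F_L\times F_R$ and must stay absorbing; if that fails, "violated at some prefix" must be replaced by "final state in $F_{\mathrm{viol}}$," and the same argument goes through by checking every explored state. We would state these assumptions explicitly and note that soundness is relative to extraction correctness (Definition~\ref{def:extraction_correctness}).
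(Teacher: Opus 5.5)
Your proof is correct and follows the same route as the paper's. Each trace lifts to a path of product states starting from the initial state, and completeness of BFS over the finite space $V\times Q$ then catches any violating prefix; your explicit induction and stated assumptions (multi-tool event encoding, absorbing violation sets for \texttt{OR}) make rigorous what the appendix only sketches. Keep your convention of reading $\sigma(u)$ on entering $u$: the appendix's rule $(v,q)\mapsto(u,\delta(q,\sigma(v)))$ never consumes the symbol of a node without successors, so a forbidden tool at a dead end would go unreported, and your indexing is the one under which the lemma actually holds.
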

\begin{proof}
The product BFS explores all reachable states in
$V \times Q$.  Each product state $(v, q)$ represents being at graph
node~$v$ with the DFA in state~$q$.  For each successor~$u$ of~$v$ in~$G$,
the DFA transitions via $q' = \delta(q, \sigma(v))$ where $\sigma(v)$ is the
event symbol for node~$v$, and the product successor $(u, q')$ is enqueued.
If no reachable product state has $q' \in F_{\mathrm{viol}}$, then no
execution trace --- which corresponds to a path through product states ---
can reach a violation state.  This is an over-approximation: the product
explores all \emph{graph paths}, which is a superset of actual runtime traces
(since runtime traces depend on LLM decisions at router nodes).  Hence the
analysis is sound but potentially conservative.
\end{proof}

\begin{lemma}[Temporal Monitor Soundness]
\label{lem:monitor}
For each of the three base DFA patterns (forbidden, implication-future,
until), if the compiled monitor reports no violation on a finite trace
$\pi$, then $\pi$ satisfies the corresponding temporal formula.
\end{lemma}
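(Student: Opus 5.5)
The plan is to fix, for each of the three base patterns, an explicit DFA consistent with Table~\ref{tab:dfa_states}. For each one I identify the set $F_{\mathrm{viol}}$ of violation states and prove a state invariant by induction on the length of the prefix read. Each invariant characterizes what the current DFA state says about the prefix $\pi[0..i]$. Soundness is the contrapositive: if the formula fails on $\pi$, some prefix drives the DFA into $F_{\mathrm{viol}}$. Because every $F_{\mathrm{viol}}$ state will be absorbing, the monitor then reports a violation.

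First I fix the finite-trace semantics, since the paper's descriptions are informal.
\begin{itemize}
\item \texttt{G !a} holds iff no position satisfies $a$.
\item \texttt{a -> F b} holds iff between any two $a$-events there is a $b$-event, and also after the last $a$ there is one. This is the ``before $a$ recurs'' reading.
\item \texttt{a U b} holds iff some position $j$ satisfies $b$ and every $i<j$ satisfies $a$.
\end{itemize}

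\textbf{Forbidden.} The DFA has states $\{q_0, q_{\mathrm{viol}}\}$. Reading an event satisfying $a$ goes to $q_{\mathrm{viol}}$, which is absorbing. The invariant is: after reading a prefix, the state is $q_0$ iff no event of the prefix satisfies $a$. This follows by a one-line induction.

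\textbf{Until.} The DFA has states $\{q_0, q_{\mathrm{acc}}, q_{\mathrm{viol}}\}$. In $q_0$, reading $b$ goes to the absorbing state $q_{\mathrm{acc}}$. Reading $a\wedge\neg b$ stays in $q_0$. Reading $\neg a\wedge\neg b$ goes to the absorbing state $q_{\mathrm{viol}}$. The invariant is: the state is $q_0$ iff every event read so far satisfies $a\wedge\neg b$.

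\textbf{Implication-future.} The DFA has states $\{q_0, q_{\mathrm{pend}}, q_{\mathrm{viol}}\}$, where $q_{\mathrm{pend}}$ records an unanswered $a$. Reading $b$ returns to $q_0$. Reading $a\wedge\neg b$ in $q_{\mathrm{pend}}$ goes to $q_{\mathrm{viol}}$. Events that are simultaneously $a$ and $b$ need a fixed convention, which I will state explicitly. The invariant is: the state is $q_{\mathrm{pend}}$ iff the prefix contains an $a$ with no later $b$ in the prefix.

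\textbf{Main obstacle: end of trace.} The naive claim fails at the end of a finite trace. A trace that stops in $q_{\mathrm{pend}}$ for \texttt{a -> F b} has never entered $q_{\mathrm{viol}}$, yet it violates $\mathbf{F}\,b$. The same happens for a trace stopping in $q_0$ for \texttt{a U b}, since no $b$ ever occurred. I would handle this in one of two ways.
\begin{enumerate}
\item Define ``reports no violation'' on a finite trace as ending in an accepting state. This means the monitor performs an end-of-trace check and flags $q_{\mathrm{pend}}$ and $q_0$ respectively.
\item Otherwise, weaken the conclusion to the safety (weak) versions of the formulas, $\mathbf{G}(a\to\mathbf{X}(\neg a\,\mathbf{W}\,b))$ and $a\,\mathbf{W}\,b$.
\end{enumerate}
I prefer the first option, and I will state it explicitly as the reporting convention. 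Once it is adopted, each invariant applied at $i=n$ gives the result immediately in every case.
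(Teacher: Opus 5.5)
Your route is the same as the paper's: the same three DFAs with absorbing violation states, and a state invariant maintained by induction on the prefix. The paper asserts that invariant in one line, and you make it explicit.

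Your end-of-trace observation goes beyond the paper, and it is correct. The paper concludes ``establishing $a\,\mathbf{U}\,b$'' merely from the monitor never entering $s_2$. Yet a trace consisting of one $a\wedge\neg b$ event stays in $s_0$ and never sees $b$. The same trace leaves the \texttt{a -> F b} monitor in $s_1$ without signaling, so the paper's ``every occurrence of $a$ is followed by $b$ before $a$ recurs'' holds only in the weak sense. Your caveat therefore repairs a genuine gap in the paper's proof.

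Of your two repairs, only the second describes the monitor the paper actually has. Agentproof signals a violation only when a rule enters a violation state. This holds at runtime and in the graph$\times$DFA product of the static-soundness lemma, which checks reachability of $F_{\mathrm{viol}}$ and never inspects the DFA state at the end of a maximal path. The informal DSL definitions and the paper's stated target (the safety fragment of LTL) also point to the weak readings. Option~1 proves a true theorem, but about a different monitor, one with an end-of-trace acceptance check. Adopting it would also require the product construction to flag maximal paths whose final DFA state is $q_{\mathrm{pend}}$ (or $q_0$ for until).

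If you take option~2, adjust the implication-future formula so it matches the DFA on finite traces. Two things need fixing:
\begin{enumerate}
\item A strong $\mathbf{X}$ is false at the last position, so your formula fails on every trace whose last event satisfies $a$, even when the DFA reports nothing.
\item In the paper's DFA only $a\wedge\neg b$ opens an obligation, so a simultaneous $b$ must discharge it.
\end{enumerate}
The formula $\mathbf{G}\bigl(a\to(b\vee \mathbf{X}_{w}(\neg a\,\mathbf{W}\,b))\bigr)$, with $\mathbf{X}_{w}$ a weak next, matches the DFA exactly. Your $a\,\mathbf{W}\,b$ for until already matches as stated.

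Finally, state the $q_{\mathrm{pend}}$ invariant only for runs that have not yet reached $q_{\mathrm{viol}}$, since a violated prefix also contains an unanswered $a$.
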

\begin{proof}
The proof proceeds by cases.

\emph{Forbidden ($\mathbf{G}\,\neg a$):} The DFA has two states:
$s_0$ (safe) and $s_1$ (violated, absorbing). The transition
function maps $s_0 \xrightarrow{a} s_1$ and
$s_0 \xrightarrow{\neg a} s_0$, with $s_1$ absorbing.
If the monitor never enters $s_1$, then $a$ was false at every step,
establishing $\mathbf{G}\,\neg a$.

\emph{Implication-future ($a \to \mathbf{F}\,b$):} The DFA has three
states: $s_0$ (idle), $s_1$ (waiting for $b$), and $s_2$ (violated,
absorbing). Transition: $s_0 \xrightarrow{a \wedge \neg b} s_1$;
$s_1 \xrightarrow{b} s_0$; $s_1 \xrightarrow{a \wedge \neg b} s_2$.
If the monitor never enters $s_2$, then every occurrence of $a$ is
followed by $b$ before $a$ recurs.

\emph{Until ($a\,\mathbf{U}\,b$):} The DFA has three states: $s_0$
(waiting), $s_1$ (satisfied, absorbing), $s_2$ (violated, absorbing).
Transition: $s_0 \xrightarrow{b} s_1$;
$s_0 \xrightarrow{a \wedge \neg b} s_0$;
$s_0 \xrightarrow{\neg a \wedge \neg b} s_2$.
If the monitor never enters $s_2$, then $a$ held on every step until
$b$ occurred, establishing $a\,\mathbf{U}\,b$.

In each case the DFA state invariant is maintained by induction on
the trace prefix.
\end{proof}

\section{Temporal DSL grammar}
\label{app:grammar}

The full BNF grammar for the temporal policy DSL:

\begin{verbatim}
<rule>       ::= <forbidden>
               | <impl_future>
               | <until>
               | <bounded>
               | <chain>
               | <conjunction>
               | <disjunction>

<forbidden>  ::= "G" "!" <atom>

<impl_future>::= <atom> "->" "F" <atom>

<until>      ::= <atom> "U" <atom>

<bounded>    ::= <atom> "->" "F[<=" <int> "]" <atom>

<chain>      ::= <atom> ("->" "F" <atom>){2,}

<conjunction>::= "(" <rule> ")" "AND" "(" <rule> ")"

<disjunction>::= "(" <rule> ")" "OR" "(" <rule> ")"

<atom>       ::= "tool:" <name>
               | "action:" <name>
               | "decision:" <name>
               | <tag_name>

<name>       ::= [a-zA-Z_][a-zA-Z0-9_]*
<tag_name>   ::= [a-zA-Z_][a-zA-Z0-9_]*
<int>        ::= [1-9][0-9]*
\end{verbatim}

\paragraph{Predicate matching.}
Atomic predicates are matched against event dictionaries:
\texttt{tool:X} matches when the event's \texttt{tool\_name} field
equals \texttt{X}; \texttt{action:X} matches \texttt{action\_type};
\texttt{decision:X} matches the \texttt{decision} field; bare names
match membership in the event's \texttt{tags} set.

\end{document}